\title{Note on generalized adaptive group testing}
\newcommand{\nothing}[1]{}
\newcommand{\beq}[1]{\begin{equation}\label{#1}}
	\newcommand{\eeq}{\end{equation}}
\newcommand{\bmu}[1]{\begin{multline}\label{#1}}
	\newcommand{\emu}{\end{multline}}
\newcommand{\x}{{\textbf{\textit{x}}}}
\newcommand{\y}{{\textbf{\textit{y}}}}
\renewcommand{\geq}{\geqslant}
\renewcommand{\leq}{\leqslant}
\theoremstyle{plain} 
\newtheorem{theorem}{Theorem}
\newtheorem{proposition}{Proposition}
\newtheorem*{theorem*}{Theorem}
\author{Ilya Vorobyev}
\affil{Technical University of Munich, Munich, Germany}
\date{}       
\title{Note on generalized group testing}
\begin{document}
	\maketitle

\begin{abstract}
    In this note, we present a new adaptive algorithm for generalized group testing, which is asymptotically optimal if $d=o(\log_2|E|)$, $E$ is a set of potentially contaminated sets, $d$ is a maximal size of elements of $E$. Also, we design a 3-stage algorithm, which is asymptotically optimal for $d=2$. 
\end{abstract}

\section*{Introduction}
Group testing~\cite{dorfman1943detection} is a combinatorial problem where one needs to identify the set of defective $d$ elements among the population of $M$ elements. To achieve this goal it is allowed to test arbitrary subsets. The test result is positive if the subset contains at least one defective element; otherwise, it's negative. The goal is to find all defectives by using a minimal number of tests.

Group testing problem can be described with the language of graph theory. Let's say that we have a hypergraph $H=(V, E)$, $|V|=M$, and the set of edges $E$ consists of all possible subsets of cardinality $d$. Our goal is to identify one defective edge $e\in E$ with the help of special tests. Each test is a subset of vertices, the test result is positive if the tested set intersects defective edge $e$; otherwise, the test result is negative. It is easy to see that this search problem of an edge is equivalent to the classical group testing problem.
The natural generalization of this problem is to consider an arbitrary hypergraph $H$. Such a problem was called group testing on general set-systems in paper~\cite{gonen2022group}. The authors proved that for the adaptive setting(i.e. each test can depend on the results of previous tests) and $d$-uniform hypergraph $H$ the defective edge can be found with $O(\log |E| + d\log^2d)$ tests. 

We provide an algorithm  which uses only $\log_2|E|+2\sqrt{d\log_2|E|}+O(d)$ tests. It means that for $d=o(\log_2|E|)$ the number of tests is $\log_2|E|(1+o(1))$. For $d=\Omega(\log_2E)$ the number of tests $O(d)$. Recall the lower bound $\log_2|E|+\Omega(d)$ from~\cite{gonen2022group}. In the first case, our algorithm is optimal up to $1+o(1)$ factor, and in the second -- up to a constant factor.

In addition, we show that for $d=2$ it is possible to find the defective edge with $\log_2|E|(1+o(1))$ tests by using a 3-stage algorithm.

\section*{Adaptive algorithm}
\begin{theorem}
Let $H$ be a $d$-uniform hypergraph with one defective edge $e$. It is possible to find this defective edge by using $\log_2|E|+2\sqrt{d\log_2|E|}+O(d)$ adaptive tests.
\end{theorem}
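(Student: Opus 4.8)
The plan is to maintain the set $C \subseteq E$ of edges still consistent with all answers received so far (initially $C=E$) and to drive $|C|$ down to $1$; since each test returns a single bit, the information-theoretic target is to spend close to $\log_2|E|$ tests, so the whole game is to make almost every test split $C$ nearly in half. A test $Q\subseteq V$ partitions $C$ into $C^+(Q)=\{e'\in C : e'\cap Q\ne\emptyset\}$ (consistent with a positive answer) and $C^-(Q)=\{e'\in C : e'\cap Q=\emptyset\}$ (negative), and the adversary hands us the larger part. I would therefore choose, at each step, a $Q$ with $|C^-(Q)|$ as close to $|C|/2$ as possible.

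To build such a $Q$ I would add vertices greedily: start from $Q=\emptyset$ (so $C^-(Q)=C$) and repeatedly insert the vertex of \emph{minimum positive degree} in the current hypergraph on $C^-(Q)$, which deletes exactly that many edges from $C^-(Q)$, stopping the first time $|C^-(Q)|\le|C|/2$. The only loss of balance is the overshoot at the last insertion, which equals the minimum degree of a $d$-uniform hypergraph on at most $|C|$ edges. The key lemma I would prove is a counting bound: a $d$-uniform hypergraph with $m$ edges lives on $\Omega\bigl((d!\,m)^{1/d}\bigr)$ vertices, so by averaging its minimum degree is $O(m^{1-1/d})$. Hence the overshoot is $O(|C|^{1-1/d})$ and the larger part of the split has size at most $|C|\bigl(\tfrac12+O(|C|^{-1/d})\bigr)$; while $|C|$ is large this reduction is within a $1+o(1)$ factor of a perfect halving.

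I would then split the execution at a threshold $T$. While $|C|>T$, the near-balanced tests each cut $\log_2|C|$ by $1-O(T^{-1/d})$, costing at most $\log_2(|E|/T)\cdot\bigl(1+O(T^{-1/d})\bigr)$ tests in total. For the endgame $|C|\le T$ I would use a cruder guarantee: pick an undetermined vertex of large degree; a negative answer removes a constant fraction of $C$, while a positive answer reveals a genuinely new vertex of the defective edge $e$, and since $|e|=d$ this can happen at most $d$ times, so the endgame costs $O(\log_2 T+d)$ tests. Adding the two contributions and choosing the threshold to balance the $O\bigl(T^{-1/d}\log_2|E|\bigr)$ term against the $O(\log_2 T)$ term (i.e.\ $\log_2 T\approx\sqrt{d\log_2|E|}$) yields the overhead $2\sqrt{d\log_2|E|}+O(d)$ on top of the leading $\log_2|E|$.

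The main obstacle, and the step I would be most careful about, is the overshoot control: the leading constant $1$ survives only because the per-test imbalance is $O(|C|^{-1/d})$, so I need the minimum-degree counting lemma to be genuinely uniform over all candidate hypergraphs that can arise, and I must check that greedily deleting minimum-degree vertices never produces a configuration in which no low-degree vertex remains near the halfway point. The endgame bound is the second delicate point, since arguing that the ``positive answer reveals a new vertex of $e$'' event is charged at most $d$ times, while negative answers still make constant-factor progress, is exactly what converts a potentially $|C|$-many-test tail into the additive $O(d)$.
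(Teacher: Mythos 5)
Your main engine is sound and is a genuinely different route from the paper's. The paper never tries to balance the consistent set directly: it buckets vertices into dyadic degree classes $V_i$, groups the classes into blocks of $f=\max\bigl(1,\lceil\sqrt{\log_2|E|/d}\,\rceil\bigr)$ consecutive classes, tests blocks and then classes in decreasing-degree order, binary-searches inside the positive class for one vertex of $e$, prunes, and repeats at most $d$ times; its $2\sqrt{d\log_2|E|}$ term is exactly the count $N_3+df$ of negative block/class tests. Your greedy halving with overshoot controlled by the minimum positive degree is a cleaner information-theoretic argument, and the counting lemma behind it is correct and uniform: any $d$-uniform hypergraph with $m$ edges has at least $(d!\,m)^{1/d}$ non-isolated vertices, hence a vertex of positive degree at most $dm/(d!\,m)^{1/d}\le e\,m^{1-1/d}$, and this applies to every residual hypergraph arising mid-greedy (in particular to the one with more than $|C|/2$ edges present at the stopping step), so the worry you raise about the main phase is not an issue. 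The genuine gap is the endgame. The claim that a negative answer to a single-vertex test ``removes a constant fraction of $C$'' is false: nothing prevents the consistent hypergraph from being a matching (pairwise disjoint edges), where every vertex has degree $1$, so each negative single-vertex test removes exactly one edge and your endgame costs $\Theta(T)=\Theta\bigl(2^{\sqrt{d\log_2|E|}}\bigr)$ tests instead of $O(\log_2 T+d)$. Note the opposite extreme too: if $C$ consists of all $d$-subsets of a $(d+1)$-element vertex set, every test of two or more vertices is uninformative and every single-vertex test splits $C$ as $1$ versus $d$, so you also cannot simply run your halving step to the very end; this is precisely the configuration forcing the additive $\Omega(d)$, and it shows the endgame must use the ``at most $d$ pinning events'' accounting.

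The repair needs only tools already in your proposal: dichotomize on the maximum degree. In the endgame, if some vertex not yet known to lie in $e$ has degree at least $|C|/4$ in the consistent hypergraph, test that vertex alone; a positive answer pins a new vertex of $e$ (at most $d$ such tests can ever occur), while a negative answer removes at least $|C|/4$ edges. Otherwise all degrees are below $|C|/4$, so your greedy halving has overshoot below $|C|/4$ and either answer leaves at most $3|C|/4$ edges. Thus every endgame test either pins a vertex (at most $d$ of them) or shrinks $|C|$ by a factor of at least $4/3$ (at most $\log_{4/3}T$ of them), so the endgame costs $\log_{4/3}T+d+O(1)$. Combined with your first phase, whose excess $O\bigl(T^{-1/d}\log_2|E|\bigr)$ is indeed $O(d)$ for $\log_2 T=\sqrt{d\log_2|E|}$ (since $y\,2^{-\sqrt{y}}$ is bounded, with $y=\log_2|E|/d$), the total is $\log_2|E|+(\log_{4/3}2-1)\sqrt{d\log_2|E|}+O(d)$, where $\log_{4/3}2-1\approx 1.41$; the degenerate regime $d=\Omega(\log_2|E|)$ is handled by running the endgame alone, which then costs $O(d)$. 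With this one correction your argument proves the theorem, in fact with a slightly better constant than $2$ in front of the square-root term.
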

\begin{proof}
Sort all vertices of the hypergraph $H$ by their degrees $d_1\leq d_2\leq\ldots\leq d_M=N_1\leq |E|$.
Divide all vertices into $N_2=\lfloor\log_2{N_1}\rfloor+1$ groups. In group $i$ we include all vertices with degree $d_v$ such that $2^{i-1}\leq d_v<2^i$.
Denote this groups as $V_1$, $V_2$, \ldots, $V_{N_2}$. Now we divide this groups into $N_3=\lceil N_2/f\rceil$ sets $W_1$, $W_2$, \ldots, $W_{N_3}$, where $f$ equals $\max\left(1, \left\lceil\sqrt{\frac{\log_2|E|}{d}}\right\rceil\right)$.
The first set $W_1$ is a union of the first $f$ groups $V_1$, \ldots, $V_{f}$, the second set $W_2$ is a union of the second $f$ groups, and so on. The last set $W_{n_3}$ can have a smaller size.

We start our algorithm by testing sets $W_i$ in reverse order. After we obtain a positive result for some $W_i$, we start testing sets $V_j$, which belong to $W_i$. Again, we test them in reverse order, i.e. groups containing vertices with bigger degrees are tested earlier. At some point, we will obtain a positive result. Thus, we have a set $V_i$, which contains at least one defective element. Find this defective element with $\lceil\log|V_i|\rceil$ adaptive tests.
Erase all edges from the hypergraph $H$, which are not compatible with at least one obtained test result. Redistribute vertices of the hypergraph into sets $V_i$ and $W_j$ according to their new degrees. Note that a hypergraph vertex can only change its set $V_i$ or $W_j$ to a set with a smaller index.
Repeat the procedure with modification, that there is no need to test set $W_i$ or $V_j$ if we have already obtained a negative outcome for it.

Calculate the total number of tests. The number of negative tests on $W_i$ and $V_j$ is less than $N_3+df$. The number of positive tests on $W_i$ and $V_j$ is $2d$. The number of tests used to adaptively find elements in $V_i$, $|V_i|=n$, is close to optimal since all have the same degree. More formally, let's say that all vertices from $V_i$ have degree $d_j$, $k\leq d_j<2k$. It means that before we start testing the following inequality $kn\leq|E|<2kn$ holds. After the procedure, the number of remaining edges $E'$ satisfies $|E'|<2k$. So, we spent $<\log_2n+1$ tests and reduced $\log |E|$ by at least $\log_2n - 1$, which is optimal up to a constant addend. These constants give $O(d)$ additional tests at the end. Therefore, the total number of tests is
$$
\log_2|E|+2\sqrt{d\log_2|E|}+O(d)
$$
for $d<\log_2|E|$ and $O(d)$ for $d\geq \log_2|E|$.
\end{proof}

\section*{Non-adaptive algorithm}

For a non-adaptive group testing problem, upper bounds on the number of tests are proved with the help of the probabilistic method. All these bounds\cite{d2014lectures,erdHos1985families,quang1988bounds,dyachkov1989superimposed} have the same asymptotic $O(d^2\log_2M)$, but the hidden constants are different. All these bounds can be trivially generalized for the case of group testing on general system sets. In paper~\cite{gonen2022group} the authors proved an upper bound $O(d\log_2|M|)$ by using a Bernoulli ensemble. The random coding with constant weight codes \cite{dyachkov1989superimposed} gives a better constant. Below we state a result for generalized group testing analogous to the result for traditional group testing problem from~\cite{dyachkov1989superimposed}.

\begin{theorem}
Let $H=(V, E)$ be a hypergraph with a maximal size of an edge $d$. Then the number of tests needed to non-adaptively find a defective edge in $H$ is at most $d\log_2e\log_2|E|(1+o(1))$.  
\end{theorem}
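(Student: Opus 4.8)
The plan is to use the probabilistic method with a random constant-weight code, following the approach of Dyachkov--Rykov for classical group testing. Each of the $t$ tests corresponds to a subset of vertices $V$, which I encode as a binary $t \times |V|$ matrix where column $v$ is the indicator vector of which tests contain vertex $v$. I would generate each column independently and uniformly at random from all binary vectors of a fixed weight $w = \lceil \tau t \rceil$, where $\tau \in (0,1)$ is a weight parameter to be optimized at the end. The test outcome vector for a defective edge $e$ is the Boolean OR (union of supports) of the columns indexed by the vertices in $e$.

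\textbf{Sketch of the argument.}
The key step is to show that for a suitable choice of $t$, with positive probability the resulting design separates all edges, meaning no edge's outcome vector is dominated by the union of another edge's columns in a way that causes confusion. Concretely, two distinct edges $e, e'$ are confusable precisely when the support of the union of columns in $e'$ contains the support of the union of columns in $e$ (or vice versa); to guarantee unique identification it suffices that for every ordered pair of distinct edges $(e,e')$, the union over $e$ is not contained in the union over $e'$. I would bound the probability of a single bad pair and then apply a union bound over all at most $|E|^2$ ordered pairs. Since each edge has at most $d$ vertices, the union of columns in $e'$ has weight at most $dw$, so it covers any fixed test row with probability at most roughly $dw/t = d\tau$; the probability that a single vertex of $e$ (not in $e'$) has all $w$ of its ones landing inside the support of $e'$'s union is at most $\binom{dw}{w}/\binom{t}{w} \approx (d\tau)^w$ for small $\tau$. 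Taking this to reflect the escape of at least one coordinate and optimizing gives the exponent.

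\textbf{Choice of parameters and the main estimate.}
The decisive computation is to find the rate at which a single pair fails and balance it against the $2\log_2|E|$ bits from the union bound. Writing $t = R\,d\log_2|E|$ and letting $\tau$ (hence $w = \tau t$) be free, I expect the per-pair log-failure probability to behave, to first order in $\tau$, like $-w\,\big(\ln(1/\tau)+O(1)\big)$ after accounting for the $d$ choices of the distinguishing vertex and the $d$-fold inflation of the covering set. Setting the weight parameter to the entropy-optimal value (the analogue of $\tau = 1/d$ or the value maximizing the exponent per test in the Dyachkov--Rykov calculation) makes the exponent per test approach $\log_2 e / d$, so that $t = d\log_2 e\,\log_2|E|$ tests drive the union-bounded failure probability below $1$. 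The $(1+o(1))$ factor absorbs the ceiling effects, the lower-order discrepancy between $|E|$ and $|E|^2$ in the exponent, and the slack in the binomial approximation $\binom{dw}{w}/\binom{t}{w}$.

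\textbf{Main obstacle.}
The hardest part is making the single-pair tail bound tight enough to extract the exact constant $\log_2 e$ rather than a weaker $O(d)$ constant; a crude union bound over rows loses the constant, so I would instead estimate the covering probability via the exact hypergeometric expression $\binom{t-s}{w}/\binom{t}{w}$ for a distinguishing column avoiding a support of size $s \le dw$, and then optimize the resulting exponent over $\tau$ using Stirling's approximation. Getting the two competing contributions — the $\binom{dw}{w}$ count of ways to land inside the support and the $\binom{t}{w}$ normalization — to combine into a clean exponent of $\log_2 e/d$ per test is where the care is needed, and it is exactly the point where matching the classical constant-weight result of~\cite{dyachkov1989superimposed} becomes essential.
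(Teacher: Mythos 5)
You should know at the outset that the paper contains no written proof of this theorem: it is dismissed with the sentence that it is a ``trivial generalization'' of the constant-weight random-coding bound of Dyachkov and Rykov. So your proposal is in the right family of arguments (random constant-weight columns plus a union bound), but the question is whether it actually delivers the constant $\log_2 e$, and it does not; the failure is quantitative and structural, not Stirling bookkeeping. First, your optimization is wrong on its own terms. Writing $U(e)$ for the Boolean OR of the columns of $e$, your per-pair bound $\binom{dw}{w}/\binom{t}{w}\approx (d\tau)^{\tau t}$ has decay exponent $\tau\log_2\frac{1}{d\tau}$ per test, which is maximized at $\tau=1/(ed)$ with value $\frac{\log_2 e}{ed}$, not $\frac{\log_2 e}{d}$; at your ``entropy-optimal'' $\tau=1/d$ the bound $(d\tau)^w$ equals $1$ and says nothing. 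Fed into a union bound over $|E|^2$ ordered pairs, your plan therefore proves $t\approx 2e\ln 2\, d\log_2|E|\approx 3.77\, d\log_2|E|$ --- a valid $O(d\log_2|E|)$ bound, but not the stated constant. Second, no refinement of your one-sided containment scheme can close the gap: the union of the $\leq d$ columns of $e'$ typically has weight $qt$ with $q=1-(1-\tau)^d$, the covering probability of one further column is $2^{-t\tau\log_2(1/q)(1+o(1))}$, and (substituting $u=(1-\tau)^d$ and using $\max_{u\in(0,1)}\ln\frac{1}{u}\ln\frac{1}{1-u}=(\ln 2)^2$) one checks $\tau\log_2\frac{1}{1-(1-\tau)^d}\leq\frac{\ln 2}{d}$ for every $\tau$. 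So a one-sided event never decays faster than roughly $2^{-t\ln 2/d}$, and against $|E|^2$ pairs the union bound cannot certify fewer than about $2d\log_2 e\log_2|E|$ tests --- a factor $2$ from the target. The classical complete-hypergraph argument escapes this factor only because disjunctness is union-bounded over $M\binom{M}{d}$ events, i.e. $\frac{d+1}{d}\log_2|E|$ bits, with the number of vertices $M$ being only $|E|^{\Theta(1/d)}$; in a general hypergraph $M$ can be of order $|E|$ (e.g. a sunflower whose edges all share a core of $d-1$ vertices), so that escape route does not exist. This is exactly the point where the ``trivial'' generalization stops being trivial.

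The missing idea is to treat confusion as \emph{equality} of outcome vectors rather than one-sided containment: after conditioning on the columns of $e\cap e'$, the vectors $U(e)$ and $U(e')$ are independent, so $\Pr[U(e)=U(e')]$ is a sum over candidate outcomes $\y$ of products $\Pr[U(e)=\y]\Pr[U(e')=\y]$, which effectively squares the per-pair probability and recovers the factor $2$. This is precisely the role of the quantities $p_1(\y)$ and $p_2(\y)$ in the paper's proof of its three-stage theorem for $d=2$, and it is the actual Dyachkov--Rykov computation. For instance, for the sunflower the dominant confusion mode is that both distinguishing columns hide inside the core's union: choosing $\tau$ with $(1-\tau)^{d-1}=\frac12$, this mode has probability $2^{-2t\ln 2/(d-1)(1+o(1))}$, and the union bound gives $t=(d-1)\log_2 e\log_2|E|(1+o(1))$, matching the claim. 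Two caveats if you carry this out: (i) since edges may have different sizes, nested edges make your requirement ``$U(e)\not\subseteq U(e')$ for every ordered pair'' unsatisfiable as stated, and such pairs must instead be handled by demanding strict containment of outcomes; (ii) the constant is genuinely asymptotic in $d$ --- for $d=2$ and a star the best achievable collision exponent is $\max_\tau(1-\tau)h(\tau)\approx 0.616<\ln 2$, where $h$ is the binary entropy function, so even the two-sided union bound yields only about $3.25\log_2|E|$ tests, above the claimed $2\log_2 e\log_2|E|\approx 2.89\log_2|E|$. A complete proof of the theorem as stated therefore needs either $d\to\infty$ or additional structural preprocessing of the hypergraph, of the kind the paper's three-stage algorithm performs for $d=2$.
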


We omit the proof since it is a trivial generalization of the result from~\cite{dyachkov1989superimposed}.

\section*{Optimal 3-stage algorithm for $d=2$}
For the special case of $d=2$, it is possible to find the defective edge with an optimal number of tests by using a 3-stage algorithm.
\begin{theorem}
    Let $G=(V, E)$ be an arbitrary graph with one defective edge. It is possible to find this defective edge with $\log_2|E|(1+o(1))$ tests by using a 3-stage algorithm.
\end{theorem}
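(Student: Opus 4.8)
The plan is to recycle the degree-grouping idea of the adaptive algorithm but to compress its adaptivity into exactly three batches. Sort the vertices by degree and split them into classes $V_1,\dots,V_N$ with $V_i=\{v:2^{i-1}\le d_v<2^i\}$, so that $N\le\log_2|E|+1$. For the defective edge $e=\{u,v\}$ let $u$ be the endpoint of smaller degree. The scheme rests on one accounting identity: once $u$ has been localized to a short band of $g$ consecutive classes $W=V_a\cup\dots\cup V_{a+g-1}$ and a concrete endpoint $w\in W$ has been found, its partner is one of the $d_w<2^{a+g}$ neighbours of $w$, so the edge costs only $\lceil\log_2|W|\rceil+\lceil\log_2 d_w\rceil$ further tests. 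Since the vertices of degree $\ge 2^{a-1}$ number at most $4|E|/2^{a}$, we get $|W|\le 4|E|/2^{a}$ and hence $\log_2|W|+\log_2 d_w\le(\log_2|E|+2-a)+(a+g)=\log_2|E|+g+O(1)$; choosing $g\approx\sqrt{\log_2|E|}$ keeps this overhead $o(\log_2|E|)$.

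Stage $1$ is purely non-adaptive and spends only $O(\sqrt{\log_2|E|})$ tests. Testing the prefixes $V_1\cup\dots\cup V_{jg}$ for $j=1,2,\dots$ gives a monotone sequence of answers whose jump reveals the band $W$ holding the lower endpoint $u$; testing the complementary suffixes likewise reveals the band $W'$ holding the higher endpoint $v$. In the same stage I also test $O(\log\log|E|)$ random vertex subsets together with their complements: a subset that tests positive on both sides separates the two endpoints. After Stage $1$ I therefore know $W$, $W'$ and a separating set $R$. In Stage $2$ I locate a single true endpoint by a clean binary search run on a set that is disjoint from the other endpoint: if $W\ne W'$ I search $u$ inside $W$, which is sound because $v\notin W$; if $W=W'$ I search inside $W\cap R$, which contains exactly the endpoint of $e$ lying in $R$. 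Either way this costs $\lceil\log_2|W|\rceil$ tests and returns one genuine endpoint $w$. Stage $3$ then tests subsets of the neighbourhood $N(w)$ to pin down the partner, costing $\lceil\log_2 d_w\rceil$ tests. Summing the three stages gives $\log_2|E|+O(\sqrt{\log_2|E|})=\log_2|E|(1+o(1))$.

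The step I expect to be the crux is the case $W=W'$, where both endpoints share a degree band and an ordinary binary search over $W$ would only return the bitwise OR of two codewords and decode ambiguously. This is exactly the two-defective phenomenon that makes non-adaptive search expensive, and it is why the separating set $R$ is needed: restricting the Stage~$2$ search to one side $W\cap R$ deletes the second endpoint from every test, so the superimposed-code price is never paid and the partner is still recovered cheaply through $N(w)$. Two points need care. First, the dependency chain band $\to$ endpoint $\to$ partner is genuinely sequential, so the separator and both band-localizations must all be precomputed in Stage~$1$ (none of them depends on the others), leaving Stages~$2$ and~$3$ for the two data-dependent searches. Second, the separator is guaranteed only with high probability; $O(\log\log|E|)$ random sets push the failure probability to $o(1)$, whereas a fully deterministic variant would instead have to argue that an affordable separating family exists for the unknown edge — the delicate point there being that a family separating \emph{all} edges can cost up to $\tfrac12\log_2|E|$ tests, so one must exploit that only a single edge must be split.
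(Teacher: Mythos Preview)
Your scheme is sound as a \emph{randomized} 3-stage algorithm, but the theorem (and the paper's proof) is deterministic, and the separator trick cannot be derandomized within budget. You already flag this, but it is fatal, not just delicate. Take $G=K_n$: every vertex has the same degree, so there is a single class and $W=W'$ always holds. A deterministic Stage~1 family that is \emph{guaranteed} to split the unknown edge must split every edge of $K_n$, which forces distinct labels on all $n$ vertices and hence $\lceil\log_2 n\rceil\sim\tfrac12\log_2|E|$ tests; Stages~2 and~3 then add $\log_2 n$ each, for a total of $\tfrac32\log_2|E|$. The usual probabilistic-method derandomization does not help either: the union bound over $|E|$ edges needs more than $\log_2|E|$ random sets before some fixed family separates them all. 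Your closing hope (``exploit that only a single edge must be split'') is therefore unrealizable inside this framework, because a Stage~1 family that works for the unknown edge without knowing it must work for every edge.

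The paper avoids the separator obstacle altogether. In the hard case (both endpoints in one degree class $V_i$) it does \emph{not} attempt to isolate one endpoint. Instead, Stage~2 applies a fixed constant-weight test matrix $X$ with $(1+o(1))\log_2|E|$ rows, whose existence is established by a first-moment argument: for every possible outcome vector $\y$, the subgraph of edges in $V_i$ consistent with $\y$ has both maximum degree and maximum matching below $L=\log_2\log_2|E|$, hence at most $2L^2$ edges. Stage~3 then brute-forces the $O((\log\log|E|)^2)$ surviving candidates. The crucial conceptual difference is that the paper's Stage~2 object is engineered to bound the residual \emph{edge} count uniformly over all outcomes, rather than to separate the two endpoints; that is exactly what lets the existence argument close with only $(1+o(1))\log_2|E|$ rows, whereas any endpoint-separating scheme runs into the $\tfrac12\log_2|E|$ barrier above.
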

\begin{proof}

Let's describe our algorithm. It starts in the same way as the adaptive algorithm.

Sort all vertices of the graph $G$ by their degrees, and denote the maximal degree as $N_1$. Divide all vertices into $N_2=\lfloor\log_2{N_1}\rfloor+1$ groups. In group $i$ we include all vertices with degree $d_v$ such that $2^{i-1}\leq d_v<2^i$.
Denote this groups as $V_1$, $V_2$, \ldots, $V_{N_2}$.

Use some non-adaptive algorithm to identify sets $V_i$ which contain defective elements. It requires at most $O(\log N_2)=O(\log\log |E|)$ tests.
We may obtain one or two positive results. The case with two positive results is trivial. Say that sets $V_{i_1}$ and $V_{i_2}$ contain defective elements, $|V_{i_j}|=n_j$, with a slight abuse of notation we say that degrees of vertices from $V_{i_j}$ is in $[d_j, 2d_j)$. Obviously, $|E|\geq n_1d_1$. In the second stage, we non-adaptively find one defective element in the set $V_{i_1}$ by using $\lceil \log_2 n_1 \rceil$ tests. In the third stage, we find the second defective among at most $2d_1$ neighbors of the first one by using at most $\lceil \log_22d_1\rceil$. The total number of tests used in the first, second, and third stages is at most $$
O(\log_2\log_2 |E|)+\lceil \log_2 n_1 \rceil + \lceil \log_22d_1\rceil=\log_2|E|(1+o(1)).
$$

Now we proceed to a more complicated case when only one set $V_i$ contains defective. We use the same idea as in the paper~\cite{vorobyev2019new}, where a 2-stage algorithm to find 2 defectives in a traditional setting was proposed. It turns out that we can use almost the same proof because it is only important that all vertices have approximately the same degrees. We provide modified proof for completeness.

Define $E'$ as all edges of $E$, both endpoints of which belong to $V_i$. Denote the cardinality of $V_i$ as $n$, degrees of all vertices from $V_i$ in the graph $G$ is in $[d, 2d)$.
Consider a random matrix $X$ of size $T\times n$, each column $\x_i$ of which is chosen independently and uniformly from the set of all columns of weight $wT$. We ignore the fact that this is not necessarily an integer, it will not affect our result. For any vector $\y \in \{0, \; 1\}^T$ define a graph $G(H, X, \y)=(V, E_y)$, which contains all edges $e=(v_1, v_2)$ from $E'$, such that the union of columns $\x_{v_1}$ and $\x_{v_2}$ equals $\y$. 

Let $L$ be some slowly growing function of $|E|$, for example $L=\log_2\log_2|E|$. Say that an index $v\in [n]$ is $\y$-bad index of the first type if the degree of the vertex $v$ in the graph $G_y$ is at least $L$. Call an index $v\in [n]$ a $\y$-bad index of the second type if in the graph $G_y$ the vertex $v$ is included in some matching of size at least $L$. At last, call an index $v\in [n]$ bad if it is a bad index of the first or second type.

Let's estimate the mathematical expectation of the number of bad indices. Denote the event that a fixed index $v$ is a bad index of the first (second) type for some vector $\y$ as $B_{v, \y, 1}$ ($B_{v, \y, 2}$).  We upper bound the probability $Pr(B_{v, \y, 1})$ by the probability that there exists a non-ordered collection of $L$ other vertices, such that the graph $G_y$ contains edges $(v, v_i)$ for $i=1, \ldots, L$. Hence,
\begin{equation}
    Pr(B_{v, \y, 1})\leq \binom{d_v}{L}p_1(\y)^L<(d_vp_1(\y))^L<(2\sqrt{|E|}p_1(\y))^L,
\end{equation}
where $d_v$ is a degree of the vertex $v$ in the graph $G_y$, $p_1=\binom{wT}{(q-w)T}\bigg/ \binom{T}{wT}$ is a probability that the union of $\x_{v}$, and a random column of weight $w$ equals to the vector $\y$ of weight $qT$, which covers vector $\x_{v}$. The last inequality holds since
$$
d_v^2<d_vn\leq 2dn\leq 4|E|.$$

The probability $Pr(B_{v, \y, 2})$ can be upper bounded by the probability that there exists a vertex $v_1$ such that an edge $(v, v_1)\in G_y$ and $L-1$  edges $(v_{2i}, v_{2i+1})\in G_y$ for $i=1,\ldots, L-1$, such that all these edges don't intersect each other.
\begin{equation}
    Pr(B_{v, y, 2})\leq 2d \binom{|E|}{L-1}p_1(\y)p_2(\y)^{L-1}<|E|^{L}p_2(\y)^{L-1},
\end{equation}
where $p_2(\y)=\frac{\binom{qT}{wT}\binom{wT}{(q-w)T}}{\left(\binom{T}{wT}\right)^2}$ is a probability that the union of two random vectors of weight $wT$ equals to the vector $\y$ of weight $qT$.

The mathematical expectation of the number of bad indices is at most
\begin{align}
n2^T\sup\limits_q ((2\sqrt{|E|}p_1(\y))^L+(|E|^{L}p_2(\y)^{L-1}))
\end{align}

Take $w=1-\sqrt{2}/2$, $T=\log_2|E|\frac{L+3}{L-1}$. It is easy to check that $p_1(\y)\leq 2^{(-0.57 +o(1))T}$, $p_2(\y)\leq 2^{(-1+o(1))T}$.
Using the following 3 obvious inequalities
\begin{equation}
    n2^T<|E|^{2+o(1)}
\end{equation}
\begin{equation}
    (2\sqrt{|E|}p_1(\y))^L\leq |E|^{(-0.07+o(1))L}
\end{equation}
\begin{equation}
    |E|^Lp_2(\y)^{L-1}\leq |E|^{L-\frac{L+3}{L-1}(L-1)}=|E|^{-3}.
\end{equation}
we conclude that the mathematical expectation of the number of bad indices is at most $|E|^{-1+o(1)}$. It means that there exists a matrix $X$ without bad indices. Use such a matrix as a testing matrix in the second stage.

Then we use the following simple proposition, which proof can be found in, for example, \cite{vorobyev2019new}.
\begin{proposition}
If the maximum vertex degree and the maximum cardinality of a matching in a graph $G = (V, E)$ are
less than $L$, then $|E| < 2L^2$.
\end{proposition}

It means that after the second stage we have at most $2L^2$ edges. Therefore, we can test all non-isolated vertices with $\leq 4L^2=4\log_2\log_2|E|$ tests.

The total number of tests is $O(\log\log|E|)+\log_2|E|\frac{L+3}{L-1}+4L^2=\log_2|E|(1+o(1))$.
\end{proof}

	\bibliographystyle{plain}
	\bibliography{refs}{}
	
\end{document}